\newtheorem{observation}{Observation}
\newtheorem{theorem}{Theorem}
\newtheorem{lemma}{Lemma}
\providecommand{\keywords}[1]
{
  \small	
  \textbf{\textit{Keywords---}} #1
}
\title{Approximation algorithm for finding short synchronizing words in weighted automata}
\author{Jakub Ruszil}
\affil{Jagiellonian University}
\date{March 2021}
\begin{document}

\maketitle

\begin{abstract}
In this paper we are dealing with the issue of finding possibly short synchronizing words in automata with weight assigned to each letter in the alphabet $\Sigma$. First we discuss some complexity problems, and then we present new approximation algorithm in four variations.
\end{abstract}
\keywords{formal languages, automata theory, \v{C}ern\'{y} conjecture, synchronization, algorithms}

\section{Preliminaries}
Finite automata are widely used to model variety of systems working in discrete mode. One of active areas of research in a theory of finite automata is a synchronization problem, which can be understood as a problem of resetting a system to a chosen state, from any other state \cite{cerny}. For exhaustive survey of synchronization problem in automata theory see \cite{volkov2}. A generalization of finite automata, where to every letter is attached a weight or a cost can be used to model a situation, in which different actions on system takes different amount of time. Such a modification of finite automata where studied for by number of authors, also in an aspect of synchronization, see for example \cite{volkov}, \cite{doyen}, \cite{kretinsky},\cite{szabolcs}, \cite{mohri}.\newline  
A \emph{weighted deterministic finite automaton} (WFA) $\mathcal{A}$ is an ordered tuple $(Q, \Sigma, \delta, \gamma)$ where $Q$ is a finite set of states, $\Sigma$ is a finite alphabet, $\delta:{Q \times \Sigma}\rightarrow{Q}$ is a transition function and $\gamma: \Sigma \rightarrow{\mathbb{N}}$ is a weight function.  For $\emph{w} \in \Sigma^\ast$ and $\emph{q} \in Q$ we define $\delta(\emph{q},\emph{w})$ inductively as $\delta(\emph{q},\epsilon) = q$ and $\delta(\emph{q},\emph{aw}) = \delta(\delta(\emph{q},\emph{a}), \emph{w})$ for $a \in \Sigma$ where $\epsilon$ is the empty word. We write $q.w$ instead of $\delta(q,w)$ wherever it does not cause ambiguity. We define the  weight $\gamma(w)$ of a word $w \in \Sigma^\ast$ as a sum of weights of all letters in $w$. By $|w|$ we denote the number of letters in $w$. For a given WFA $\mathcal{A}=(Q,\Sigma,\delta,\gamma)$ a word $\emph{w} \in \Sigma^\ast$ is called \emph{synchronizing} if there exists $\overline{q} \in Q$ such that for every $\emph{q} \in Q$, $q.w = \overline{q}$. A WFA is called \emph{synchronizing} if it admits any synchronizing word. Let $\mathcal{L}_n = \{\mathcal{A} = (\Sigma, Q, \delta, \gamma): \mathcal{A}\;is\;synchronizing\;and\;|Q| = n\}$. We define $d(\mathcal{A}) = min\{\gamma(w):w\; is\; synchronizing\:word\; for\; \mathcal{A}\}$ and $d(n) = max\{d(\mathcal{A}) : \mathcal{A} \in \mathcal{L}_n\}$. All above definitions and notations also apply to \emph{deterministic finite automata} (DFA) with constant cost function. Notice that problem of synchronization of a WFA is a generalization of synchronization for DFA. It becomes the problem of synchronization of DFA, if we define weights for all letters equal. Before moving further we make observation that follow from the classic results concerning theory of automata synchronization.
\begin{observation}
\label{observation:pair_aut}
Let $\mathcal{A}$ be a WFA. Then $\mathcal{A}$ is synchronizing if and only if, for each $p, q \in Q$ there exist a word $w$, such that $p.w = q.w$.
\end{observation}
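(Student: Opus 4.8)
The plan is to prove the two implications separately, noting at the outset that the weight function $\gamma$ plays no role whatsoever in whether $\mathcal{A}$ is synchronizing: being synchronizing is a property of the transition function $\delta$ alone. Thus the statement reduces to the classical pairwise-synchronization criterion for DFA, and the weighted setting contributes nothing beyond this observation.

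For the forward implication I would argue by direct specialization. If $\mathcal{A}$ is synchronizing, there is a word $w$ and a state $\overline{q}$ with $q.w = \overline{q}$ for every $q \in Q$. Then for any chosen pair $p, q \in Q$ we have $p.w = \overline{q} = q.w$, so the same $w$ witnesses the pairwise condition. This direction requires essentially no work.

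The reverse implication is the substantive one, and I would establish it by an image-shrinking induction on the size of the reachable set. Set $S_0 = Q$. Whenever $|S_i| \geq 2$, pick two distinct states $p, q \in S_i \subseteq Q$; by hypothesis there is a word $w_{i+1}$ with $p.w_{i+1} = q.w_{i+1}$. Put $S_{i+1} = S_i.w_{i+1} = \{x.w_{i+1} : x \in S_i\}$. Since the two distinct states $p$ and $q$ are sent to a common element, $|S_{i+1}| \leq |S_i| - 1$. Repeating this process, each step strictly decreases the cardinality of the image set.

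Because $|Q|$ is finite, after at most $|Q| - 1$ steps we reach a set $S_k$ with $|S_k| = 1$, say $S_k = \{\overline{q}\}$. The concatenation $w = w_1 w_2 \cdots w_k$ then satisfies $Q.w = S_k$, i.e. $q.w = \overline{q}$ for all $q \in Q$, so $w$ is synchronizing and $\mathcal{A}$ is a synchronizing WFA. The only point demanding a little care is checking that at each stage the chosen pair still lies in $Q$, so that the pairwise hypothesis is applicable; this is immediate since $S_i \subseteq Q$ throughout. I therefore do not anticipate any genuine obstacle — the argument is the standard merging construction, with $\gamma$ simply ignored.
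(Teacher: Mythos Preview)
Your proof is correct and is exactly the standard merging argument for the classical pairwise-synchronization criterion. Note, however, that the paper does not actually give its own proof of this observation: it simply records it as a fact that ``follow[s] from the classic results concerning theory of automata synchronization'' and moves on. So there is nothing to compare against; you have supplied the standard proof where the paper chose to cite folklore. Your remark that the weight function $\gamma$ is irrelevant to the statement is well placed and is precisely why the paper treats this as an immediate carryover from the unweighted DFA setting.
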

Notice that problem of synchronization of a WFA is a generalization of synchronization for DFA. It becomes the problem of synchronization of DFA, if we define weights for all letters equal. It is not hard to find an example of a WFA, for which the shortest synchronizing word (with minimal number of letters) is not equal to the one with minimal weight. Example of such automaton is depicted on the figure below. 
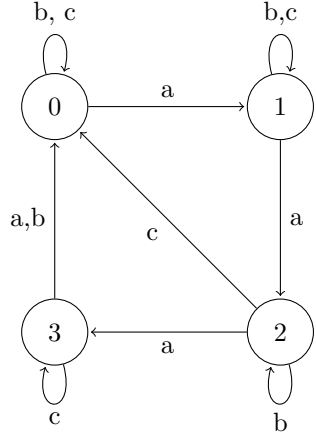
\begin{figure}[H]
    \centering
    \begin{tikzpicture}[shorten >=1pt,node distance=3cm,on     grid,auto] 
        \node[state] (0)   {$0$}; 
        \node[state] (1) [right=of 0] {$1$}; 
        \node[state] (2) [below=of 1] {$2$}; 
        \node[state] (3) [left=of 2] {$3$};
        \path[->] 
        (0) edge node {a} (1)
                    edge [loop above] node {b, c} ()
        (1) edge  node {a} (2)
            edge [loop above] node {b,c } ()
        (2) edge  node {a} (3)
        edge  node {c} (0)
        edge [loop below] node {b} ()
        (3) edge  node {a,b} (0)        edge [loop below] node {c} ();

        \end{tikzpicture}
          \label{fig:example}
    \caption{Automaton $\mathcal{A}$}
\end{figure}
Let us define $\gamma(a) = 1$, $\gamma(b) = 1 $, $\gamma(c) = 6$ - it is easy to check, that although the shortest synchronizing word for $\mathcal{A}$ is $baacb$, the word with minimal weight is $baaabaaab$.\newline
With all those definitions and observations we may now introduce a new, polynomial-time algorithm for approximating shortest synchronizing word in WFA.

\section{Algorithm}

We consider the following modification of our definition of WFA. Let $\mathcal{A} = (Q, \Sigma, \delta, \phi)$ be a (\textit{extended WFA}) eWFA, where $Q$, $\Sigma$ and $\delta$ are as in WFA, but $\phi: \Sigma \times Q \rightarrow{\mathbb{N}}$ is a weight function which assigns weights to transitions rather than to letters. Let $\phi: Q \times \Sigma^* \rightarrow{\mathbb{N}}$ be a weight function. Let also $w = a_1...a_k$ and $a_i \in \Sigma$. We define $\phi(q,w) = \sum_{i=0}^{k-1} \phi(q.(a_1...a_i), a_{i+1})$. We also define the weight of synchronization of $\mathcal{A}$ with a word $w \in \Sigma^*$ as $\phi(w) = \max_{q \in Q} \phi(q, w)$.  Define the following decision problem:\newline
\emph{E-WEIGHTED-SYNCHRO-WORD: Given an eWFA $\mathcal{A}$ and a positive integer l, is it true that $\mathcal{A}$ admits a synchronizing word $w$ such that $\phi(w) \leq l$?}
\newline In 2012 Volkov et. al. \cite{volkov} proved that \emph{E-WEIGHTED-SYNCHRO-WORD} is \emph{NP}-complete if $l$ is given in unary representation, however is \emph{PSPACE}-complete, when $l$ is given in binary representation, because the shortest synchronizing words can be exponentially long in $|Q|$. We are going to discuss a similar problem but for WFA. Next, we give an algorithm which, for a given WFA, returns a weighted synchronizing word of weight $O(k\cdot |Q|^3)$, where $k$ is the maximum weight of a (single) letter, if such a word exists. 
Before moving forward we first discuss complexity issues of a problem of finding shortest synchronizing words in a WFA. Consider the following problem:\newline
\emph{WEIGHTED-SYNCHRO-WORD: Given a WFA $\mathcal{A}$ and a positive integer l, is it true that $\mathcal{A}$ admits a synchronizing word $w$ such that $\gamma(w) \leq l$?}\newline
\begin{lemma}
\label{lemma:np}
\emph{WEIGHTED-SYNCHRO-WORD} $\in$ NP.
\end{lemma}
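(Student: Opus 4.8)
The plan is to exhibit, for every yes-instance, a certificate that can be checked in polynomial time, the natural candidate being a synchronizing word $w$ with $\gamma(w)\le l$ itself. Verification is straightforward: reading $w$ letter by letter we maintain the current image $Q.(a_1\cdots a_i)$, which costs $O(|w|\cdot|Q|)$ table look-ups and lets us confirm $|Q.w|=1$; simultaneously we accumulate $\sum_i \gamma(a_i)$ and check that it does not exceed $l$. Both tests run in time polynomial in $|w|$ and in the size of $\mathcal{A}$, so the only real issue is to guarantee that some valid $w$ is short enough to serve as a polynomial-size certificate.

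The first step is to settle synchronizability itself: by Observation~\ref{observation:pair_aut} it suffices to verify, in the pair automaton, that every pair $\{p,q\}$ can be collapsed, which is a polynomial reachability computation; if $\mathcal{A}$ is not synchronizing we reject. Assuming $\mathcal{A}$ is synchronizing, I would invoke the classical cubic bound on the length of a shortest synchronizing word (see the survey \cite{volkov2}): there is a synchronizing word $w_0$ with $|w_0|\le \frac{|Q|^3-|Q|}{6}$, hence $\gamma(w_0)\le k\cdot\frac{|Q|^3-|Q|}{6}$, where $k=\max_{a\in\Sigma}\gamma(a)$. In particular $d(\mathcal{A})\le k\cdot\frac{|Q|^3-|Q|}{6}$. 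This gives a clean case split. If $l\ge k\cdot\frac{|Q|^3-|Q|}{6}$, then $w_0$ is already a witness of weight at most $l$ and of length polynomial in $|Q|$, so we may guess it. If instead $l< k\cdot\frac{|Q|^3-|Q|}{6}$, then because $\gamma$ is additive over the letters of $w$ and (the weights being positive integers) every letter contributes weight at least one, any witness $w$ with $\gamma(w)\le l$ satisfies $|w|\le\gamma(w)\le l$; so it is enough to guess a word of length at most $l$. In both branches a witness of length at most $\max\!\left(\tfrac{|Q|^3-|Q|}{6},\,l\right)$ exists, and the nondeterministic machine guesses it and runs the verification above.

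The step I expect to be the main obstacle is precisely this length bound in the second branch, where the subtlety of the problem is concentrated. The naive ``no repeated image'' shortcutting argument---if two prefixes of a minimal witness induce the same image $Q.(a_1\cdots a_i)=Q.(a_1\cdots a_j)$, then the infix between them may be deleted without increasing the weight---only bounds $|w|$ by the number of subsets $2^{|Q|}$, which is far too weak. What rescues membership in NP, and what must be used carefully, is that in a WFA (unlike in the eWFA model, where the weight $\phi(w)=\max_{q}\phi(q,w)$ is a path cost and the problem jumps to PSPACE-completeness for $l$ in binary) the weight $\gamma(w)$ depends only on the Parikh image of $w$; together with positive, polynomially bounded letter weights this forces $|w|\le l$ and keeps the certificate polynomial. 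I would therefore state the lemma under the same encoding convention as the cited eWFA dichotomy---$l$ and the weights given in unary---so that $\max\!\left(\tfrac{|Q|^3-|Q|}{6},\,l\right)$ is polynomial in the input size. The remaining delicate point is the presence of letters of weight zero, which break the inequality $|w|\le\gamma(w)$ and could in principle lengthen a cheap witness without bound; these must be handled separately, and the cleanest route is to assume $\gamma$ takes strictly positive values, as is natural when the weights model durations or costs of actions.
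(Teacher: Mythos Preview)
Your proof is correct and follows the same guess-and-verify template as the paper, but you are considerably more careful about the one nontrivial point, namely the polynomial bound on the certificate length. The paper's own proof is very terse: it only observes that the verifier runs in time $O(|Q|\cdot|w|)$ and does not address the size of $w$ inside the proof at all; the unary assumption on $l$ is stated only in a remark \emph{after} the proof, and the binary-$l$ case is deferred to the cubic upper bound obtained later from the algorithm (Theorem~\ref{theorem:main1}). Your version front-loads this by invoking the classical $(|Q|^3-|Q|)/6$ length bound and doing the case split on $l$, which makes the argument self-contained and already covers $l$ in binary. Under the unary convention the split is actually unnecessary, since $|w|\le\gamma(w)\le l$ alone bounds the certificate, but the extra work is harmless. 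Your remark on zero-weight letters is well taken; the paper simply takes $\gamma:\Sigma\to\mathbb{N}$ with the positive convention and never raises the issue.
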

\begin{proof}
To prove that \emph{WEIGHTED-SYNCHRO-WORD} is in NP it suffices to show that there exists a polynomial algorithm that verifies if an instance $( \mathcal{A},l,w)$ of the problem with $w \in \Sigma^*$, belongs to \emph{WEIGHTED-SYNCHRO-WORD}.
Namely we need to check if a given word $w$, with $\gamma(w) \leq l$ synchronizes $\mathcal{A}$. To this end we check, if there exists a $q \in Q$, such that for all $p \in Q$ we have that $p.w = q$. It is obvious that we can do it in time $O(|Q|\cdot |w|)$ and we claim the lemma holds.
\end{proof}
Notice that here we assumed, that $l$ was given in unary representation. Corollary of our further considerations is a fact, that if WFA is synchronizing, then it admits synchronizing word of weight at most polynomially long in a number of states and that proves, that \emph{WEIGHTED-SYNCHRO-WORD} is in NP even if $l$ is given in binary representation. Recall the following theorem from \cite{gawrychowski}.
\begin{theorem}[Gawrychowski, Straszak]
\label{theorem:gawrych}
For every constant $\epsilon > 0$, unless $P = NP$, no polynomial algorithm approximates shortest synchronizing word for a given DFA within a factor of $n^{1-\epsilon}$.
\end{theorem}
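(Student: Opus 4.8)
The statement is a hardness-of-approximation result, so the plan is to exhibit a \emph{gap-preserving} reduction from a problem already known to be hard to approximate, rather than merely from an NP-hard decision problem. Concretely, the goal is to prove that the following promise problem is NP-hard: given a DFA on $n$ states, distinguish the case in which the shortest synchronizing word has length at most $s$ from the case in which every synchronizing word has length greater than $s\cdot n^{1-\epsilon}$. A polynomial-time algorithm approximating the shortest synchronizing word within a factor $n^{1-\epsilon}$ would decide this promise problem, so its NP-hardness yields the theorem under $P\neq NP$.

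First I would build a gap-creating reduction. Starting from an instance $\varphi$ of (a gap version of) SAT, I would construct a DFA $\mathcal{A}_\varphi$ whose letters encode the choices of truth values and whose state space is organized in layers that force a synchronizing word to ``read off'' and verify a candidate assignment. The construction should guarantee a dichotomy: if $\varphi$ is satisfiable there is a canonical synchronizing word of some short length $s$, whereas if $\varphi$ is unsatisfiable no assignment passes all the verification gadgets, so collapsing all states requires revisiting gadgets many times and every synchronizing word is substantially longer. This already produces a fixed multiplicative gap $g>1$ between the two cases.

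The decisive step is gap amplification: a constant (or small polynomial) gap must be boosted to $n^{1-\epsilon}$. Here I would use a self-composition / product construction on DFAs, combining $t$ copies of $\mathcal{A}_\varphi$ so that any synchronizing word of the composite automaton is forced to synchronize each copy in turn, making the reset-word lengths combine multiplicatively while the number of states grows only polynomially in $t$. If one composition step takes a gap $g$ to $g^{c}$ for a constant $c>1$ at a polynomial cost in states, then after $t$ steps the gap is $g^{c^{t}}$ and the final state count $N$ is still polynomial in the original size; choosing $t$ appropriately drives the ratio above $N^{1-\epsilon}$ for any desired $\epsilon$.

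The main obstacle I anticipate is precisely this amplification: in a product of automata a single letter can act on all coordinates simultaneously, so a reset word of the product can be far shorter than the product of the individual reset words. The heart of the proof is therefore to design the composition so that the ``no'' instances genuinely inherit the amplified lower bound, ensuring that letters cannot be shared to shortcut the synchronization, while the ``yes'' instances retain a short canonical word. Verifying this lower bound, rather than the matching upper bound or the state-count bookkeeping, is where the real work lies.
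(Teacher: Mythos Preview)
The paper does not prove this theorem at all: it is quoted verbatim as a known result of Gawrychowski and Straszak (the reference \cite{gawrychowski}), introduced with ``Recall the following theorem from \cite{gawrychowski}'' and immediately used as a black box in the subsequent reduction to WFAs. There is therefore nothing in the paper to compare your proposal against.

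Your sketch is a reasonable high-level outline of how inapproximability results of this type are obtained, and in spirit it is close to what Gawrychowski and Straszak actually do (a gap-creating reduction from SAT followed by an amplification step). But as a self-contained argument it is only a plan, not a proof: the construction of $\mathcal{A}_\varphi$ with the claimed dichotomy is asserted rather than given, and, as you yourself note, the amplification step is the crux and you have not specified a composition that provably prevents a single word from synchronizing several copies at once. So the proposal identifies the right architecture but leaves both substantive steps as obstacles rather than resolving them. For the purposes of this paper, however, none of that is required: the theorem is simply cited.
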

Having that, we can construct a simple reduction from a problem of approximating shortest synchronizing word in DFA, to a problem of finding synchronizing word of minimal weight in WFA. Notice that the following theorem, together with Lemma \ref{lemma:np} also implies $NP$-completeness of \emph{WEIGHTED-SYNCHRO-WORD}. Let us state and prove the following theorem. 
\begin{theorem}
\label{theorem:approx}
For every constant $\epsilon > 0$, unless $P = NP$, no polynomial algorithm approximates weighted synchronizing word for a given WFA with weights bounded by some constant $k$, within a factor of $k\cdot n^{1-\epsilon}$.
\end{theorem}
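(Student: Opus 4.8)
The plan is to reduce the DFA shortest-synchronizing-word approximation problem of Theorem \ref{theorem:gawrych} to the weighted problem through the most direct embedding available: every DFA is already a WFA in which each letter carries weight one. First I would take an arbitrary DFA $D = (Q, \Sigma, \delta)$ on $n$ states and view it as a WFA $\mathcal{A} = (Q, \Sigma, \delta, \gamma)$ with $\gamma(a) = 1$ for all $a \in \Sigma$. These instances have all weights bounded by $1$, hence by any constant $k \ge 1$; the construction is computable in linear time and leaves $|Q| = n$ unchanged. Since $\gamma(w) = |w|$ for every word under the unit weighting, $\mathcal{A}$ is synchronizing exactly when $D$ is, and $d(\mathcal{A})$ equals the length of a shortest synchronizing word of $D$.

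I would then argue by contraposition. Assume a polynomial algorithm $\mathcal{B}$ approximates the weighted synchronizing word, on WFAs with weights bounded by $k$, within a factor $k \cdot n^{1-\epsilon}$. Feeding $\mathcal{A}$ to $\mathcal{B}$ returns a synchronizing word $w$ of weight $\gamma(w) = |w| \le (k \cdot n^{1-\epsilon})\, d(\mathcal{A})$, where $d(\mathcal{A})$ is the minimal length of a synchronizing word of $D$. Thus $\mathcal{B}$, applied to these unit-weight instances, is a polynomial-time algorithm that outputs a synchronizing word for $D$ whose length is within a factor $k \cdot n^{1-\epsilon}$ of optimal.

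The step that needs the only real care is the treatment of the constant factor $k$, since $k \cdot n^{1-\epsilon}$ is a weaker guarantee than $n^{1-\epsilon}$ and so does not contradict Theorem \ref{theorem:gawrych} verbatim. Here I would exploit that $k$ is a fixed constant: for every $\epsilon > 0$ there is $\epsilon' > 0$ (for instance $\epsilon' = \epsilon/2$) with $k \cdot n^{1-\epsilon} \le n^{1-\epsilon'}$ for all sufficiently large $n$, the finitely many remaining values of $n$ being solvable by brute force. Consequently $\mathcal{B}$ would furnish a polynomial-time $n^{1-\epsilon'}$-approximation for the shortest synchronizing word of a DFA, contradicting Theorem \ref{theorem:gawrych} with $\epsilon'$ in place of $\epsilon$ unless $P = NP$. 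The rest is routine: one only checks that the word output by $\mathcal{B}$ is genuinely synchronizing (guaranteed by the definition of an approximation algorithm for this problem) and of polynomial length, so that the whole procedure remains polynomial.
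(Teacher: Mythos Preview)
Your proof is correct and follows essentially the same reduction as the paper, deriving the WFA inapproximability directly from the Gawrychowski--Straszak result for DFAs. The only differences are cosmetic: the paper sets every letter weight to $k$ rather than to $1$, and you are more explicit than the paper about absorbing the constant factor $k$ into the exponent via the $\epsilon' = \epsilon/2$ trick, a step the paper glosses over.
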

\begin{proof}
Let $\mathcal{A} = (Q, \Sigma, \delta)$ be a DFA. We construct $\mathcal{B} = (Q, \Sigma, \delta, \gamma)$, such that $\Sigma$, $Q$ and $\delta$ are such as in $\mathcal{A}$, and for any $a \in \Sigma$, we define $\gamma(a) = k$. Let us suppose that there exists an algorithm $f$ that approximates minimal weight of a synchronizing word  for a bounded WFA. That would imply the existence of algorithm $f_2$ that approximates shortest synchronizing word in a given DFA, as follows. Algorithm $f_2$ constructs a WFA $\mathcal{B}$ for a given DFA $\mathcal{A}$ in a way described above (notice that this construction can be done in polynomial time), runs $f$ for $\mathcal{B}$ and returns the word computed by $f$. It can be easily seen that such a word would approximate shortest synchronizing word for $\mathcal{A}$ within a factor of $n^{1-\epsilon}$ and that is a contradiction.
\end{proof}
Before proceeding, we introduce the following definition. With any WFA $\mathcal{A}$ we associate another WFA, $\mathcal{A}^m = (\binom{Q}{\leq m}, \Sigma, \tau, \gamma)$, where $\binom{Q}{\leq m}$ stands for the set of subsets of $Q$ of cardinality at most $m$, $\Sigma$ and $\gamma$ are as in $\mathcal{A}$ and the transition function $\tau:{\binom{Q}{\leq m} \times \Sigma}\rightarrow{\binom{Q}{\leq m}}$ is defined for $Q' \in \binom{Q}{\leq m}$ and $a \in \Sigma$ as follows: $\tau(Q', a) = \bigcup_{q \in Q'}\delta(q, a)$. Notice that for $m = |Q|$ we obtain classic \emph{power automaton} construction. \newline
Now we are ready to introduce our algorithm. In the following we fix $m \in \{2, ..., |Q|\}$, define $t:Q\times Q\rightarrow{\{0,1\}}$ and $h:Q\times Q\times \Sigma^*\rightarrow{\mathbb{R}_+}$. More detailed description of $t$ and $h$ is discussed further on. First we describe two procedures \textbf{$m$-COMPUTE-WORDS} and \textbf{$m$-APPROXIMATE-WEIGHT-SYNCH} and then we prove their correctness and estimate their time complexity.\newline
\begin{algorithm}[H]
\SetKwInOut{Input}{input}
\SetKwInOut{Output}{output}
\LinesNumbered 
\Input{WFA $\mathcal{A} = (Q, \Sigma, \delta, \gamma)$}
\Output{Set $W = \{ (Q', w) : Q' \in \binom{Q}{\leq m}, w \in \Sigma^\ast \}$}
 $W = \varnothing$\;
 Compute $\mathcal{A}^m$ for $\mathcal{A}$\;
 $S = \varnothing$\;
 \For{singleton $p$ in $\mathcal{A}^m$ }{
    compute shortest paths to $p$ from all other possible states from $\mathcal{A}^m$ and add them to $S$\;
 }
    \For{$s$ in $S$}{
        produce word $w$ from path $s$\;
        $Q = $ first state on $s$\;
        \If{there is $(P, u) \in W$ such that $P = Q$ and $\gamma(u) > \gamma(w)$}{
         remove $(P,u)$ from $W$\;
        }
    \If{there is no $(P, u) \in W$ such that $P = Q$}{
    add $(Q,w)$ to $W$\;
    }
    }
 return $W$\;
 \caption{$m$-COMPUTE-WORDS}
 \label{algorithm:comp_words}
\end{algorithm}

\begin{algorithm}[H]
\SetKwInOut{Input}{input}
\SetKwInOut{Output}{output}
\LinesNumbered 
\Input{WFA $\mathcal{A} = (Q, \Sigma, \delta, c)$}
\Output{ $w \in \Sigma^\ast$ such that $w$ synchronizes $\mathcal{A}$ or $\epsilon$}
 $W =$ $m$-COMPUTE-WORDS($\mathcal{A}$, $m$)\;
 $T = Q$\;
 $u = \epsilon$\;
\While{$|T| > 1$}{
    choose $(P,w)$ from $W$ such that $t(P, T)$ is true and $h(P,T,w)$ is minimal\;
    \If{such $(P,w)$ does not exist}{
    return $\epsilon$\;
    }
    $u = uw$\;
    $T = T.w$
}
return $u$\;
 \caption{$m$-APPROXIMATE-WEIGHT-SYNCH}
 \label{algorithm:approx}
\end{algorithm}

We can define $t(P,T)$ and $h(P,T,w)$ in a few ways. First let $t(P,T)$ be a predicate $t(P, T) \equiv (P \subseteq T \wedge |P| > 1)$ and $h(P,T,w) = \frac{\gamma(w)}{|P| - |P.w|}$. Denote that choice of $t$ and $h$ as H1. Definition of $h$ and $t$ is motivated by the will to choose not necessarily the word with minimal possible weight in each iteration, but to balance between the weight and the ability of a word to reduce the cardinality of the set $T$ in each step. We propose also another three heuristics for choosing words $(P,w)$ in line 5 of Algorithm \ref{algorithm:approx}. Define $t(P,T)$ in the same manner as in H1 and $h(P,T,w) = \frac{\gamma(w)}{|T| - |T.w|}$ and denote this choice as H2. The reason for this definition is a possibility of finding words in $W$, that may not be the shortest ones, but reduce more states in $T$ and therefore are a good choice. Next heuristic we propose is as follows: define $t(P,T) \equiv (P \subseteq T \wedge |P| = \min(m,|T|))$ and  $h(P,T,w) = \gamma(w)$ and denote this choice as H3. Here we model a greedy approach, known from the classic Eppstein's heuristic algorithm for finding synchronizing words for a given DFA \cite{eppstein}. The last variant, denoted as H4, is a modification of H2 above where we let  $t(P,T)$ be as in H2 and H1 and $h(P,T,w) = \frac{\gamma(w)}{(|T| - |T.w|)^2}$. Here we suppose that the ability of a word to send many states into one is more important than the weight $\gamma(w)$. It is likely, that using H1, H2 or H3, the Algorithm \ref{algorithm:approx} will take $a \in \Sigma$ in first iteration, which can be not an optimal choice in numerous situations, for example \v{C}ern\'{y} automata \cite{cerny}. Using $H4$ we want to prevent this situation, by taking square in a denominator. For the sake of next proofs we define $k = max\{\gamma(a)\colon a \in \Sigma\}$.

\begin{lemma}
\label{lemma:min}
For each element $(P,w) \in W$ returned by $m$-COMPUTE-WORDS procedure we have $|P.w| = 1$ and $w$ is a word with minimal weight having that property.
\end{lemma}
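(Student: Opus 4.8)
The plan is to prove the two claims separately after fixing the exact, weight-preserving correspondence between words and paths in the auxiliary automaton $\mathcal{A}^m$. First I would record the monotonicity fact that reading a letter never increases cardinality: for every $Q' \in \binom{Q}{\leq m}$ and $a \in \Sigma$ we have $|\tau(Q',a)| = |\bigcup_{q \in Q'}\delta(q,a)| \le |Q'|$. Consequently a walk in $\mathcal{A}^m$ starting at a set of cardinality at most $m$ never leaves $\binom{Q}{\leq m}$, so reading a word $w = a_1\cdots a_k$ from a set $P$ traces a genuine path $P, P.a_1, P.a_1a_2, \ldots, P.w$ in $\mathcal{A}^m$. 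Weighting each edge labelled $a$ by $\gamma(a)$, the total weight of this path is exactly $\gamma(w)$, and conversely every path from $P$ in $\mathcal{A}^m$ reads off a word of equal weight. This yields a bijection, preserving $\gamma$, between words taking $P$ to a given target set and paths in $\mathcal{A}^m$ from $P$ to that set, and it is the engine of the whole argument.

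For the first claim I would observe that every pair placed into $W$ has the form $(Q,w)$, where $Q$ is the first vertex and $w$ the word read along some path $s \in S$, and that by construction each path in $S$ terminates at a singleton $\{p\}$. By the correspondence above, $P.w = \{p\}$, hence $|P.w| = 1$ for every $(P,w) \in W$.

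For the minimality claim I would argue in two steps. Since all edge weights $\gamma(a)$ are non-negative integers, the shortest-path routine (for instance Dijkstra run backwards from each singleton) returns, for each source set $Q'$ and each singleton $\{p\}$, a path of minimal total weight; by the correspondence its word has minimal $\gamma$ among all words with $Q'.w = \{p\}$. Next I would verify the loop invariant that, after the second loop, $W$ contains for each starting set $P$ occurring in $S$ exactly one pair $(P,w)$, whose $w$ minimises $\gamma$ over all paths in $S$ issuing from $P$: the first conditional discards a stored pair precisely when a lighter word for the same $P$ is encountered, and the second re-inserts the lighter one, so the stored weight is non-increasing and terminates at the minimum. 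Because $S$ contains a minimal-weight path from $P$ to every reachable singleton, that minimum is in fact taken over all singletons; and since any word $w'$ with $|P.w'| = 1$ necessarily satisfies $P.w' = \{p\}$ for some singleton, $w$ is of minimal weight among all words with $|P.w| = 1$, as claimed.

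The main obstacle I anticipate is not any single computation but pinning down the two bookkeeping facts cleanly: first, that \emph{shortest path} must be read here as minimal weighted path, so that a standard non-negative-weight shortest-path algorithm is applicable and correct; and second, that the interplay of the two \textbf{If} statements in the second loop genuinely enforces the per-$P$ minimum rather than leaving duplicates or retaining a suboptimal word. Both reduce to a short invariant argument, but they are exactly the points where an informal reading could slip.
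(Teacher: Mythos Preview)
Your proposal is correct and follows essentially the same approach as the paper: identify words with paths in $\mathcal{A}^m$, use Dijkstra to obtain minimum-weight paths to singletons, and read off minimality of $\gamma(w)$ from that. Your treatment is considerably more thorough than the paper's---in particular, the paper's proof does not explicitly justify why the second loop (lines 6--14) selects, for each $P$, the minimum over \emph{all} target singletons; you supply the invariant argument that the paper leaves implicit.
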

\begin{proof}
Consider procedure $m$-COMPUTE-WORDS. In a \textbf{for} loop in line 4 we compute paths to all singletons in $\mathcal{A}^m$ from all possible other states using the Dijkstra algorithm. After that loop every element in $S$ consists of a sequence of vertices from some state to a singleton. It is obvious that any such path corresponds to a word $w$, such that $|P.w| = 1$, where $P$ is the first element of that path. Minimality of $\gamma(w)$ is immediate from the fact that $S$ contains paths with minimal weight and that concludes the proof.
\end{proof}
\begin{lemma}
\label{lemma:estim}
If $\mathcal{A}$ is synchronizing then for each $(P,w) \in W$ returned by $m$-COMPUTE-WORDS procedure we have $c(w) \leq (|P|-1)\cdot k\cdot \binom{|Q|}{2}$.
\end{lemma}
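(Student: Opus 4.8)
The plan is to lean on Lemma \ref{lemma:min}, which guarantees that for each $(P,w) \in W$ the word $w$ has minimal weight among all words sending $P$ to a singleton. Consequently it suffices to exhibit \emph{some} word $v$ with $|P.v| = 1$ and $\gamma(v) \leq (|P|-1)\cdot k \cdot \binom{|Q|}{2}$; the minimality of $w$ then forces $c(w) = \gamma(w) \leq \gamma(v)$, giving the claimed estimate (here I identify the cost function $c$ of Algorithm \ref{algorithm:approx} with the weight function $\gamma$). I would construct such a $v$ by the classical iterative pair-merging argument that underlies the cubic synchronization bounds, adapted so as to track weights instead of lengths.

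First I would show that a single pair can be merged cheaply. Fix the current image set $P' = P.u$, where $u$ is the word built so far, and suppose $|P'| \geq 2$; pick two distinct states $p, q \in P'$. By Observation \ref{observation:pair_aut}, since $\mathcal{A}$ is synchronizing there exists a word $x$ with $p.x = q.x$. Taking a shortest such $x$ and reading it as a path in $\mathcal{A}^2$ (the restriction of the $\mathcal{A}^m$ construction to $m = 2$), the path starts at the pair $\{p,q\}$ and terminates at the first singleton it reaches, passing only through two-element subsets in between. A shortest path repeats no vertex, and the number of distinct two-element subsets of $Q$ is $\binom{|Q|}{2}$, so $|x| \leq \binom{|Q|}{2}$. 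Since every letter has weight at most $k$, this gives $\gamma(x) \leq k \cdot \binom{|Q|}{2}$. Moreover $p.x = q.x$ forces $|P'.x| < |P'|$, so appending $x$ strictly reduces the cardinality of the current image.

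I would then iterate this step. Starting from $P$ and repeatedly appending merging words $x$ as above, the cardinality of the image drops by at least one each time, so after at most $|P| - 1$ steps the image becomes a singleton. Concatenating the words produced yields a word $v$ with $|P.v| = 1$, and since each of the at most $|P|-1$ steps contributes weight at most $k \cdot \binom{|Q|}{2}$, we obtain $\gamma(v) \leq (|P|-1)\cdot k \cdot \binom{|Q|}{2}$. Together with the first paragraph this proves the lemma.

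I expect the main obstacle to be the per-step length bound: one must argue carefully that a shortest merging word corresponds to a simple path through the $\binom{|Q|}{2}$ pairs of $\mathcal{A}^2$ and therefore cannot exceed $\binom{|Q|}{2}$ letters, and that bounding each letter's weight by $k$ is the right way to convert this length bound into a weight bound when the letters carry different weights. Once that single-merge estimate is in place, the remainder is a routine telescoping over the $|P|-1$ cardinality reductions.
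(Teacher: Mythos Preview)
Your proposal is correct and follows essentially the same route as the paper: bound the cost of merging one pair by $k\cdot\binom{|Q|}{2}$ via the no-repeated-pair argument in $\mathcal{A}^2$, iterate this $|P|-1$ times to obtain a word $v$ with $|P.v|=1$ and $\gamma(v)\le(|P|-1)\cdot k\cdot\binom{|Q|}{2}$, and then invoke Lemma~\ref{lemma:min} to conclude $\gamma(w)\le\gamma(v)$. The paper's proof is organized slightly differently (it postpones the appeal to Lemma~\ref{lemma:min} to the last line) but the ingredients and logic are identical.
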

\begin{proof}
From Observation \ref{observation:pair_aut} we have that for any $p, q \in Q$ there exists a word $u$, such that $p.u = q.u$. Take the shortest $u$ with such property. Since $\mathcal{A}$ is deterministic and $u$ is minimal, for each prefixes $v_1 \neq v_2$ of $u$, such that $v_1$ is a prefix of $v_2$ it holds, that $\{p,q\}.v_1 \neq \{p,q\}.v_2$ and $|\{p,q\}.v_1| \geq |\{p,q\}.v_2|$. That implies $\gamma(u) \leq k \cdot \binom{|Q|}{2}$, since the weight of every letter is bounded by $k$. Hence, we can construct a word $u_1u_2\ldots u_{|P|-1}$, such that $|P.u_1u_2\ldots u_{|P|-1}| = 1$. On the other hand from Lemma 1 we have that $|P.w| = 1$ and $\gamma(w)$ is minimal among all words with that property, so we have $\gamma(w) \leq \gamma(u_1u_2\ldots u_{|P|-1}) \leq (|P|-1)\cdot k\cdot \binom{|Q|}{2}$. 
\end{proof}

\begin{lemma}
\label{lemma:disamb}
If $\mathcal{A}$ is synchronizing then for each $P \subset Q$, such that $|P| \leq m$, there is exactly one element of $W$ having $P$ as the first coordinate.
\end{lemma}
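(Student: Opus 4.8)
The plan is to prove the two halves of ``exactly one'' separately: first that at most one element of $W$ can have a given subset $P$ as its first coordinate, which is a purely structural property of how $m$-COMPUTE-WORDS maintains $W$, and then that at least one such element must exist, which is the point at which synchronizability of $\mathcal{A}$ is used.

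For uniqueness I would argue by induction on the iterations of the \textbf{for} loop in line~6 of $m$-COMPUTE-WORDS, keeping the invariant that at every moment $W$ contains at most one pair whose first coordinate equals any fixed subset of $Q$. Initially $W = \varnothing$, so the invariant holds vacuously. In a single iteration we process a path $s$ with starting subset $P$ and produced word $w$. If $W$ already contains a pair $(P,u)$, then by the induction hypothesis it is unique, and the two conditionals either discard $(P,u)$ and then insert $(P,w)$ (when $\gamma(u) > \gamma(w)$), or leave $W$ untouched (when $\gamma(u) \le \gamma(w)$, since then the first \textbf{if} fails, and the second \textbf{if} also fails because a pair with first coordinate $P$ is still present). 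If $W$ contains no such pair, only the second conditional fires and inserts $(P,w)$. In every case $W$ ends the iteration with exactly one pair having first coordinate $P$, so the invariant is preserved.

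For existence I must show that every $P$ with $|P| \le m$ occurs as the starting subset of some path collected into $S$. The set $S$ consists of shortest paths in $\mathcal{A}^m$ that terminate at singletons, so it suffices to exhibit, for each such $P$, a singleton reachable from $P$ inside $\mathcal{A}^m$. The key observation is that $\tau$ never increases cardinality: since each $\delta(\cdot,a)$ is a function, $|\tau(Q',a)| = |\{\delta(q,a) : q \in Q'\}| \le |Q'|$. Consequently, starting from $P$ and applying any word keeps us within $\binom{Q}{\le m}$, so every such application traces a genuine path of $\mathcal{A}^m$. Because $\mathcal{A}$ is synchronizing, the construction in the proof of Lemma~\ref{lemma:estim} (resting on Observation~\ref{observation:pair_aut}) yields a word $w$ with $|P.w| = 1$; reading $w$ letter by letter gives a path from $P$ to the singleton $P.w$ lying entirely in $\binom{Q}{\le m}$. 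Hence a singleton is reachable from $P$, Dijkstra returns a finite shortest path from $P$ to it, and that path, with first coordinate $P$, is added to $S$. By the uniqueness bookkeeping above, the corresponding pair is retained in $W$ (or swapped for an equal-or-lighter one with the same first coordinate), so $W$ contains at least one element with first coordinate $P$.

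Combining the two parts gives exactly one element of $W$ with first coordinate $P$. I expect the only delicate point to be the claim that the entire trajectory of $w$ stays inside $\binom{Q}{\le m}$: this is what makes the abstract reachability in the power construction coincide with the concrete transitions available to Dijkstra in $\mathcal{A}^m$, and it rests squarely on the monotonicity $|\tau(Q',a)| \le |Q'|$. The bookkeeping in the uniqueness step, in particular the interplay of the two conditionals when the weights tie, also warrants care, but is routine once the invariant is fixed.
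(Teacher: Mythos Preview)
Your proposal is correct and follows essentially the same two-part strategy as the paper: existence via the reachability of a singleton from any $P$ (the paper just points to the argument of Lemma~\ref{lemma:estim}, while you spell out the monotonicity $|\tau(Q',a)|\le|Q'|$ that keeps the trajectory inside $\mathcal{A}^m$), and uniqueness via the bookkeeping of the two \textbf{if} blocks (the paper simply gestures at line~9, whereas you verify the invariant through a case analysis). Your write-up is more careful than the paper's own proof, but the underlying approach is the same.
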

\begin{proof}
Using similar argument as in the proof of Lemma \ref{lemma:estim}, we can easily prove that there exists at least one such element in W. To prove that there is at most one such element, consider the \textbf{if} statement in line $9$. It is obvious that after execution of that \textbf{if}, there can be at most one element with $P$ as the first coordinate in $W$ and that ends the proof.
\end{proof}
\begin{theorem}
\label{theorem:main1}
Procedure $m$-APPROXIMATE-WEIGHT-SYNCH using H1,H2,H3 or H4 returns a weighted synchronizing word $u$, such that $\gamma(u) \leq k \cdot (|Q|-1) \cdot \binom{|Q|}{2}$ if such a word exists.
\end{theorem}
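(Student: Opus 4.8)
The plan is to handle all four heuristics at once, because the weight bound will not depend on the precise shape of $h$, but only on two features that the predicate $t$ guarantees for the chosen pair $(P,w)$: that $P \subseteq T$ and $|P| \geq 2$. First I would record this. For H1, H2 and H4 the predicate explicitly demands $|P| > 1$; for H3 it demands $|P| = \min(m,|T|)$, which is at least $2$ since $m \geq 2$ and the loop is entered only while $|T| > 1$. I would then argue that the \emph{choose} step never fails while $|T| > 1$, so the procedure does not return $\epsilon$ when a synchronizing word exists. Assuming $\mathcal{A}$ synchronizing, select any subset $P \subseteq T$ of the cardinality required by $t$ (a pair for H1/H2/H4, an arbitrary $\min(m,|T|)$-subset for H3); such $P$ exists because $|T| \geq 2$, and $|P| \leq m$, so Lemma \ref{lemma:disamb} supplies exactly one element of $W$ with first coordinate $P$, which satisfies $t(P,T)$.

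The combinatorial heart of both the termination and the weight estimate is the inequality $d \geq |P|-1$, where $d := |T| - |T.w|$ is the number of states removed in one iteration. Since $P \subseteq T$ and, by Lemma \ref{lemma:min}, $|P.w| = 1$, all $|P|$ states of $P$ are sent to a single image, whence $|T.w| \leq 1 + |T \setminus P| = |T| - (|P|-1)$ and thus $d \geq |P|-1 \geq 1$. Consequently $|T|$ strictly decreases, the loop runs at most $|Q|-1$ times and halts with $|T| = 1$. A one-line induction on the updates $u \leftarrow uw$, $T \leftarrow T.w$ shows $T = Q.u$ is maintained, so the returned $u$ satisfies $|Q.u| = 1$ and is synchronizing.

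For the weight bound I would charge each chosen word against the states it actually eliminates. If $(P_i,w_i)$ is chosen in iteration $i$ and $d_i$ is the corresponding reduction, then Lemma \ref{lemma:estim} together with $d_i \geq |P_i|-1$ gives $\gamma(w_i) \leq (|P_i|-1)\cdot k \cdot \binom{|Q|}{2} \leq d_i \cdot k \cdot \binom{|Q|}{2}$. Writing $T_0 = Q, T_1, \dots, T_r$ for the successive values of $T$, the reductions telescope to $\sum_{i=1}^{r} d_i = |T_0| - |T_r| = |Q|-1$, so summing the per-step bounds yields $\gamma(u) = \sum_{i=1}^{r}\gamma(w_i) \leq k \cdot \binom{|Q|}{2}\sum_{i=1}^{r} d_i = k \cdot (|Q|-1)\cdot \binom{|Q|}{2}$, exactly as claimed.

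The step I expect to be the main obstacle is precisely this telescoping, or rather resisting the temptation to skip it. Bounding every chosen word by the worst case $(|Q|-1)\cdot k \cdot \binom{|Q|}{2}$ of Lemma \ref{lemma:estim} and multiplying by the at most $|Q|-1$ iterations only gives the far weaker $(|Q|-1)^2\cdot k \cdot \binom{|Q|}{2}$. Recovering the linear factor requires the observation $d_i \geq |P_i|-1$, and the delicate point in proving it is to check that collisions occurring outside $P$ can only decrease $|T.w|$ further and never spoil the bound.
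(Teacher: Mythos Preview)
Your proposal is correct and follows essentially the same line as the paper's proof: bound each chosen word via Lemma~\ref{lemma:estim} by $(|P_i|-1)\cdot k\cdot\binom{|Q|}{2}$, establish the key inequality $|T_i|-|T_{i+1}|\ge |P_i|-1$, and telescope to obtain $\sum_i(|P_i|-1)\le |Q|-1$. Your write-up is in fact a bit more careful than the paper's in that you explicitly invoke Lemma~\ref{lemma:min} for $|P.w|=1$, separately verify $|P|\ge 2$ for H3, and spell out the invariant $T=Q.u$, all of which the paper leaves implicit.
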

\begin{proof}
Consider the \textbf{while} loop in $m$-APPROXIMATE-WEIGHT-SYNCH procedure. At the start of that loop we have $T = Q$. In line 5 the algorithm chooses an element of $W$, that "reduces" the cardinality of $T$. Notice that from Observation \ref{observation:pair_aut} such an element exists if $\mathcal{A}$ is synchronizing and $|T| > 1$. If at any step of the \textbf{while} loop the algorithm cannot find an element that fulfills $t(P,T)$, we also can deduce from Observation \ref{observation:pair_aut} that there exists at least one pair $p,q \in Q$ that cannot be "reduced" and by that there is no synchronizing word for that automaton. Notice that $|W|$ is $O(|Q|^m)$. Also, any $w$ associated to $P$ in $W$ is of length $O(|Q|^m)$, because any set on path from $P$ to a singleton can be visited at most once, and, since $\mathcal{A}$ is deterministic, we know that for each prefix $w_1$ of $w$ we have that $|P.w_1| \leq |P|$. So we know that $m$-APPROXIMATE-WEIGHT-SYNCH in polynomial time decides if $\mathcal{A}$ is synchronizing. Now we estimate the weight of $u$ returned by $m$-APPROXIMATE-WEIGHT-SYNCH. Assume that $\mathcal{A}$ is synchronizing. Denote by $(P_i,w_i)$ the pair chosen in $i$-th step of \textbf{while} loop. Also denote $T_1 = Q$, and $T_i = T_{i-1}.w_{i-1}$ for $i \in \{1, .., |Q|-1\}$. If for any $j < |Q| - 1$ we have $|T_{j}.w_j| = 1$, we define $w_l = \epsilon$ and $P_l = T_j.w_j$, for all $l > j$.  It is obvious that $P_i \subseteq T_i$. On the other hand, from Lemma \ref{lemma:estim} for each $i$ we have that $\gamma(w_i) \leq (|P_i| - 1)\cdot k \cdot \binom{|Q|}{2}$. So we have $\gamma(u) = \gamma(w_1w_2..w_{|Q|-1}) \leq \sum_{i=1}^{|Q|-1} (|P_i|-1)\cdot k \cdot \binom{|Q|}{2} = k \cdot \binom{|Q|}{2} \cdot (\sum_{i=1}^{|Q|-1} (|P_i|-1))$. Notice that for each $i$, since $P_i \subseteq T_i$ it holds, that $|T_{i}| - |T_{i + 1}| \geq |P_{i}| - 1$. Also $\sum_{i=1}^{|Q|-1} (|T_{i}| - |T_{i + 1}|) = |Q| - 1$. That implies $\sum_{i=1}^{|Q|-1} (|P_i|-1)) \leq |Q| - 1$ and that concludes the proof.
\end{proof}
\begin{theorem}
\label{theorem:complexity}
Procedure $m$-APPROXIMATE-WEIGHT-SYNC using H1, H2, H3 or H4 works in time $O(|\Sigma|\cdot |Q|^{m+1} \log |Q| + |Q|^{m+3})$.
\end{theorem}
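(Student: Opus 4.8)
The plan is to split the total running time into the cost of the single call to $m$-COMPUTE-WORDS and the cost of the \textbf{while} loop of $m$-APPROXIMATE-WEIGHT-SYNCH, bound each contribution separately, and then add them. Throughout I would treat $m$ as fixed, so that $\binom{Q}{\leq m}$ has $O(|Q|^m)$ elements and $\log(|Q|^m) = O(\log|Q|)$. I would also record at the outset that $\mathcal{A}^m$ has $O(|Q|^m)$ states and, since every state has exactly $|\Sigma|$ outgoing transitions, $O(|\Sigma|\cdot|Q|^m)$ edges; these two counts drive everything.

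For $m$-COMPUTE-WORDS I would first account for building $\mathcal{A}^m$: each of the $O(|\Sigma|\cdot|Q|^m)$ transitions $\tau(Q',a)=\bigcup_{q\in Q'}\delta(q,a)$ is obtained by applying $\delta$ to at most $|Q|$ states and locating the resulting subset, and this is dominated by the shortest-path phase below. The core cost is the \textbf{for} loop in line 4: there are at most $|Q|$ singletons, and for each one I run Dijkstra on the reversed edge set of $\mathcal{A}^m$ to obtain minimum-weight paths from every subset to that singleton. With a binary heap a single run costs $O((E+V)\log V)=O(|\Sigma|\cdot|Q|^m\log|Q|)$, and summing over the $O(|Q|)$ singletons yields $O(|\Sigma|\cdot|Q|^{m+1}\log|Q|)$. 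Extracting a word from each path and the deduplication carried out by the \textbf{if} statements in lines 9 and 11 touch each of the $O(|Q|^m)$ produced pairs only a constant number of times, so they are subsumed; this gives the first summand.

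For the \textbf{while} loop I would first bound the number of iterations. In every iteration the chosen pair $(P,w)$ has $|P|\geq 2$ and, by Lemma \ref{lemma:min}, $|P.w|=1$; since $P\subseteq T$ this forces $|T.w|\leq|T|-1$, so $|T|$ strictly decreases and there are at most $|Q|-1=O(|Q|)$ iterations. Within one iteration the dominant work is line 5, where the algorithm scans all $O(|Q|^m)$ pairs of $W$ (at most this many by Lemma \ref{lemma:disamb}) and for each evaluates $t(P,T)$ and $h(P,T,w)$. Bounding the cost of one such evaluation — testing $P\subseteq T$ and, for the heuristics that need it, determining the image — by $O(|Q|^2)$ gives $O(|Q|^{m+2})$ for the scan, after which the update $T=T.w$ is no more expensive. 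Multiplying by the $O(|Q|)$ iterations produces $O(|Q|^{m+3})$, the second summand, and adding the two summands finishes the proof.

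The step I expect to be the main obstacle is the per-iteration analysis of line 5: one must simultaneously control the number of candidate pairs (supplied by Lemma \ref{lemma:disamb}), the word lengths $|w|$ — which are $O(|Q|^m)$ because a shortest path in $\mathcal{A}^m$ visits each subset at most once, exactly as used in the proof of Theorem \ref{theorem:main1} — and the cost of evaluating $h$ for the variants H2 and H4, where $h$ depends on $|T.w|$ and so seems to require recomputing the image of $T$ under a word of length $O(|Q|^m)$, rather than reusing the stored value $\gamma(w)$ as in H1 and H3. Pinning down a single evaluation bound that holds uniformly for all four heuristics, and checking that neither this scan nor the update $T=T.w$ dominates the claimed $O(|Q|^{m+3})$, is the delicate part; once that bound is fixed, the remaining arithmetic is routine.
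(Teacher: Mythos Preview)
Your decomposition—bounding $m$-COMPUTE-WORDS via $|Q|$ Dijkstra runs on $\mathcal{A}^m$ and then bounding the \textbf{while} loop by at most $|Q|-1$ iterations—is exactly the paper's, and your treatment of the first summand is identical.

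The gap is in the per-iteration bound, and it is the very inconsistency you flag in your last paragraph without resolving. You claim a single evaluation of $t$ and $h$ costs $O(|Q|^2)$, yet you also record $|w|=O(|Q|^m)$; with that word length, computing $|T.w|$ for one candidate under H2 or H4 already costs $O(|Q|\cdot|Q|^m)$, so your $O(|Q|^{m+2})$ scan bound does not follow from your own estimates. The paper's accounting differs here: it invokes the sharper length bound $|w|\le(m-1)\binom{|Q|}{2}=O(|Q|^2)$, coming from the pair-merging construction behind Lemma~\ref{lemma:estim}, so that applying the selected $w$ to $T$ costs only $O(|Q|^3)$. It then treats the scan over $W$ as $O(|Q|^m)$—effectively constant work per candidate—and charges the $O(|Q|^3)$ just once per iteration, for the update $T\gets T.w$; the product $(|Q|-1)\cdot\bigl(O(|Q|^m)+O(|Q|^3)\bigr)$ is then absorbed into $O(|Q|^{m+3})$. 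The ingredient you are missing, therefore, is this quadratic bound on $|w|$. That said, the paper is just as silent as you are on how $|T.w|$ is supplied \emph{during} the scan for H2 and H4 without paying $O(|Q|^3)$ per candidate, so the point you single out as delicate is glossed over there as well.
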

\begin{proof}
We first consider the procedure $m$-COMPUTE-WORDS. Computing $\mathcal{A}^m$ can be done in time $O(|\Sigma|\cdot |Q|^{m+1})$. Notice that there are exactly $|Q|$ singletons in $\mathcal{A}^m$, so the loop in line $4$ is executed exactly $|Q|$ times. It is easy to see that the set of vertices of the underlying digraph of that automaton is of size $O(|Q|^{m})$ and the edge set size is $O(|\Sigma|\cdot|Q|^{m})$. 
If we use Dijkstra algorithm to compute shortest paths, we obtain time complexity of the for loop in line $4$ is $O(|Q| \cdot |\Sigma|\cdot|Q|^{m} \log |Q|^{m}) = O(|\Sigma|\cdot |Q|^{m+1}\cdot m\log |Q|) = O(|\Sigma|\cdot |Q|^{m+1}\log |Q|)$.
Now consider the \textbf{for} loop in line 6. We can implement Dijkstra algorithm to obtain words instead of paths, so line 7 does not affect the time complexity. Lines 9--12 can be executed in expected constant time using a dictionary, so the time complexity of the loop is $O(|S|) = O(|Q|^{m+1})$. Joining it all together we obtain that the time complexity of $m$-COMPUTE-WORDS procedure is $O(|\Sigma|\cdot |Q|^{m+1} \log |Q|)$. Now we estimate time complexity of $m$-APPROXIMATE-WEIGHT-SYNC. COMPUTE-WORDS is executed once, and lines 2 and 3 have constant complexity. Consider the \textbf{while} loop. Notice, that at the end of the \textbf{while} block we have either $|T.w| < |T|$ or we return $\epsilon$, so this loop is executed at most $|Q| - 1$ times. From earlier considerations and from Lemma \ref{lemma:disamb} we obtain that the size of $W$ is $O(|Q|^m)$, so choosing proper $(P,w)$ takes at most $O(|Q|^m)$ time. Lines 6, 7 and 8 are executed in constant time. Since $|T| \leq |Q|$ and $|w| \leq (m - 1)\binom{|Q|}{2}$, we can compute $T.w$ in $O(|Q|^3)$. From all that we have that $m$-APPROXIMATE-WEIGHTED-SYNC works in time $O(|\Sigma|\cdot |Q|^{m+1} \log |Q| + |Q|^{m+3})$ and that concludes this theorem.
\end{proof}
Another remark is that even for $m = 2$ we obtain a weighted synchronizing word of weight $O(k \dot |Q|^3)$. If we choose greater $m$, we can shorten the intermediate words $w$ but at a cost of higher time complexity. \newline
We also provide an example of an automaton $\mathcal{B}$ (depicted below) that illustrates the difference between different heuristics. We define $\gamma(a) = 1, \gamma(b) = 6, \gamma(c) = 2, \gamma(d) = 7$. Results of WEIGHTED-SYNCHRO-WORD for different values of $m$ and different heuristics are presented in a Table \ref{table:results}. They were obtained using author's implementation of described algorithm, which can be found under this link \url{https://colab.research.google.com/drive/14MmedWbN83KB-_t-rhaZwpexBaNBfQAa?usp=sharing}. 

\begin{table}[H]
\begin{center}
\begin{tabular}{ |p{1cm}|p{8.5cm}|p{1cm}|p{1cm}|  }
\hline
 Heur. & Word & Length & Weight\\
\hline
\multicolumn{4}{|c|}{$m = 2$} \\
\hline
H1& $caacaacaacaacaacbcaaaabcaaaabcaabcadadadadbcadad$ &48 &120 \\
H2& $caacaacaacaacaacbcaaaabcaaaabcaabcadadadadbcadad$ &48 &120\\
H3& $caacaacaacaacaacbcaaaabcaaaabcaabcadadadadbcadad$ &48 &120\\
H4& $dadbcadadaaabcadadaaaaaaaadbcadad$ & 33&105\\
\hline
\multicolumn{4}{|c|}{$m = 3$} \\
\hline
H1& $caacaacaacaacaacbcaaaabcaaaabcaabcadadadadbcadad$ &48 &120\\
H2& $caacaacaacaacaacbcaaaabcaaaabcaabcadadadadbcadad$ &48 &120\\
H3& $bcaaabcaaabcaaabcdadbcadadadbcadad$ & 34&112\\
H4& $dadadadbcadadabcdadad$ &21 &87\\
\hline
\multicolumn{4}{|c|}{$m = 4$} \\
\hline
H1& $caacaacaacaacaacbcaaaabcaaaabcaabcadadadadbcadad$ &48 &120\\
H2& $caacaacaacaacaacbcaaaabcaaaabcaabcadadadadbcadad$ &48 &120\\
H3& $bcaadaabcaadaabcaadbcaaaaaaaaadadbcadad$ &39 &111\\
H4& $dadadadadbcadadacdadad$ &22 &89\\
\hline
\multicolumn{4}{|c|}{$m = 5$}\\
\hline
H1& $caacaacaacaacaacbcaaaabcaaaabcaabcadadadadbcadad$ &48 &120\\
H2& $caacaacaacaacaacbcaaaabcaaaabcaabcadadadadbcadad$ &48 &120\\
H3& $bcadadaaabcadadadadadbcadad$ & 27 & 79\\
H4& $dadadadadbcadadacdadad$ &22 & 89\\
\hline
\multicolumn{4}{|c|}{$m = 6$} \\
\hline
H1& $caacaacaacaacaacbcaaaabcaaaabcaabcadadadadbcadad$ &48 &120\\
H2& $caacaacaacaacaacbcaaaabcaaaabcaabcadadadadbcadad$ &48 &120\\
H3& $dbcadadaaaadbcadadadadbcadad$ & 28&106\\
H4& $bcaaaaaaadadadadbcadadadbcadad$ & 30&102\\
\hline
\end{tabular}
\end{center}
\label{table:results}
\caption{Results for various heuristics and values of $m$ for $\mathcal{B}$}
\end{table}
The shortest synchronizing word for automaton $\mathcal{B}$ is $dadadadadadabdcadad$ (19 letters, weight 79) and the word with minimal weight is $bcaaaaaadadadadadbcadad$ (23 letters, weight 77). From the results in Table \ref{table:results} we can deduce, that not always choosing higher $m$ does the job. Also, in this particular example heuristics H4 works the best for almost all values of $m$ we chose, except for $m=5$, for which $H3$ is the best choice. 
\begin{figure}[H]
    \centering
   \begin{tikzpicture}
    \foreach \phi in {0,...,11}{
        \node[state,fill=white] (v_\phi) at (360/12 * \phi:4cm) {$\phi$};
      }; 
      \path[->] 
        (v_0) edge node[right] {a,c} (v_1)
        (v_1) edge node[above right] {a} (v_2)
        (v_2) edge node[above] {a} (v_3)
        (v_3) edge node[above] {a} (v_4)
        (v_4) edge node[above left] {a} (v_5)
        (v_5) edge node[left] {a} (v_6)
        (v_6) edge node[left] {a} (v_7)
        (v_7) edge node[below left] {a} (v_8)
        (v_8) edge node[below] {a} (v_9)
        (v_9) edge node[below] {a} (v_10)
        (v_10) edge node[below right] {a} (v_11)
        (v_11) edge node[right] {a,b} (v_0)
        (v_5) edge [bend right = 20] node[right,pos=0.4]  {d} (v_4)
        (v_11) edge [loop right] node {c,d} ()
        (v_0) edge [loop right] node {b,d} ()
        (v_1) edge [loop right] node {b,c,d} ()
        (v_2) edge [loop above] node {b,c,d} ()
        (v_3) edge [loop above] node {b,c,d} ()
        (v_4) edge [loop above] node {b,c} ()
        (v_5) edge [loop left] node {b,c,d} ()
        (v_6) edge [loop left] node {b,c,d} ()
        (v_7) edge [loop left] node {b,c,d} ()
        (v_8) edge [loop below] node {b,c,d} ()
        (v_9) edge [loop below] node {b,c,d} ()
        (v_10) edge [loop below] node {b,c,d} ();
\end{tikzpicture}
          \label{fig:example2}
\caption{Automaton $\mathcal{B}$}
\end{figure}
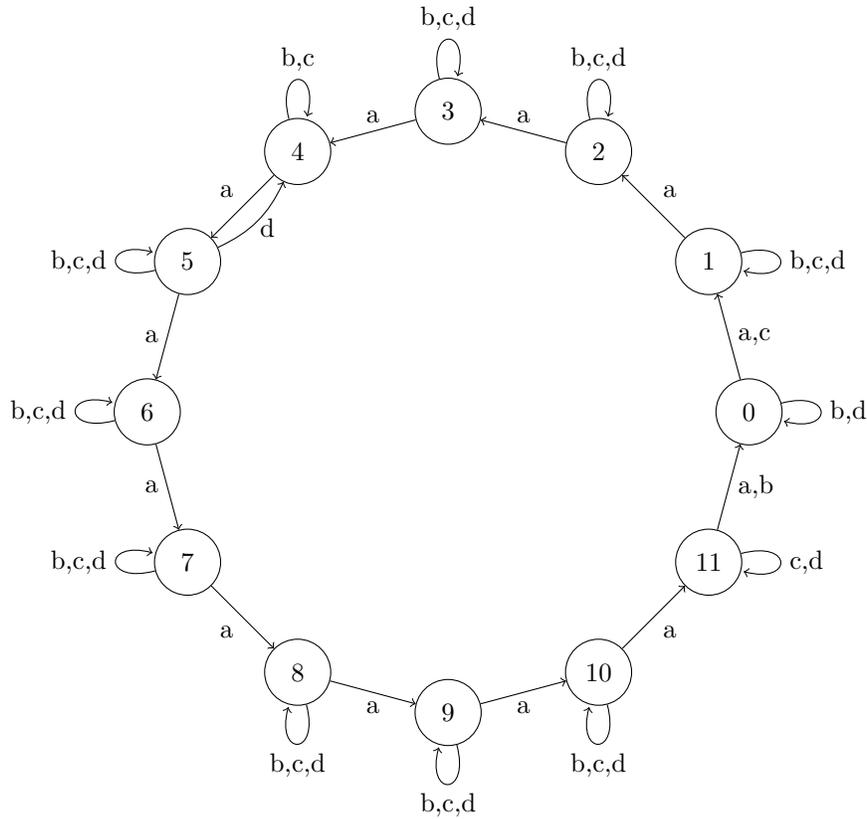

\end{document}